\newcommand{\rV}{\rVert}
\newcommand{\lV}{\lVert}
\newcommand{\rv}{\rvert}
\newcommand{\lv}{\lvert}
\newcommand{\tV}{\vert\kern-0.25ex\vert\kern-0.25ex\vert}
\newcommand{\la}{\langle}
\newcommand{\ra}{\rangle}
\newcommand{\laa}{\langle\!\langle}
\newcommand{\raa}{\rangle\!\rangle}
\newcommand{\ct}{\ensuremath{^{\dagger}}}
\DeclareMathOperator{\tr}{Tr}
\DeclareMathOperator{\opv}{vec}
\DeclareMathOperator{\spec}{spec}
\newcommand{\cB}{\ensuremath{\mathcal{B}}}
\newcommand{\cC}{\ensuremath{\mathcal{C}}}
\newcommand{\cD}{\ensuremath{\mathcal{D}}}
\newcommand{\cE}{\ensuremath{\mathcal{E}}}
\newcommand{\cG}{\ensuremath{\mathcal{G}}}
\newcommand{\cH}{\ensuremath{\mathcal{H}}}
\newcommand{\cI}{\ensuremath{\mathcal{I}}}
\newcommand{\cL}{\ensuremath{\mathcal{L}}}
\newcommand{\cR}{\ensuremath{\mathcal{R}}}
\newcommand{\cS}{\ensuremath{\mathcal{S}}}
\newcommand{\cU}{\ensuremath{\mathcal{U}}}
\newcommand{\cZ}{\ensuremath{\mathcal{Z}}}
\newcommand{\bbC}{\ensuremath{\mathbb{C}}}
\newcommand{\bbE}{\ensuremath{\mathbb{E}}}
\newcommand{\bbG}{\ensuremath{\mathbb{G}}}
\newcommand{\bbU}{\ensuremath{\mathbb{U}}}
\newcommand{\bbX}{\ensuremath{\mathbb{X}}}
\newcommand{\bbZ}{\ensuremath{\mathbb{Z}}}
\newcommand{\hato}{\ensuremath{\hat{I}}}
\theoremstyle{plain}
\newtheorem{thm}{Theorem}
\newtheorem{lem}[thm]{Lemma}
\theoremstyle{definition}
\theoremstyle{remark}
\newtheorem*{rem}{Remark}
\begin{document}

\title{Randomized benchmarking with gate-dependent noise}
\date{December 18, 2017}
\author{Joel J. Wallman}
\affiliation{Institute for Quantum Computing and Department of Applied 
Mathematics, University of Waterloo, Waterloo, Ontario N2L 3G1, Canada}
\orcid{0000-0001-6943-5334}

\begin{abstract}
We analyze randomized benchmarking for arbitrary gate-dependent noise and prove that the exact impact of gate-dependent noise can be described by a single perturbation term that decays exponentially with the sequence length. 
That is, the exact behavior of randomized benchmarking under general gate-dependent noise converges exponentially to a true exponential decay of exactly the same form as that predicted by previous analysis for gate-independent noise. 
Moreover, we show that the operational meaning of the decay parameter for gate-dependent noise is essentially unchanged, that is, we show that it quantifies the average fidelity of the noise between ideal gates. 
We numerically demonstrate that our analysis is valid for strongly gate-dependent noise models.
We also show why alternative analyses do not provide a rigorous justification for the empirical success of randomized benchmarking with gate-dependent noise.
\end{abstract}

\maketitle

\section{Introduction}

The development of practical, large-scale devices that process quantum 
information relies upon techniques for efficiently characterizing the quality 
of control operations on the quantum level. Fully characterizing quantum 
processes~\cite{Chuang1997,Poyatos1997} requires resources that scale 
exponentially with the number of qubits despite improvements such as compressed 
sensing and direct fidelity 
estimation~\cite{DaSilva2011,Flammia2011,Flammia2012,Reich2013,Kliesch}. 
However, some figures of merit can be efficiently estimated without fully characterizing quantum processes. 
The canonical such figure is the fidelity of an experimental implementation $\tilde{\cG}$ of an 
ideal unitary channel $\cG(\rho) = G\rho G\ct$ averaged over input states
distributed according to the Haar measure $d\psi$,
\begin{align}\label{eq:fidelity}
f(\tilde{\cG},\cG) 
= \int d\psi \tr[\cG(\psi)\tilde{\cG}(\psi)].
\end{align}

At present, the only efficient partial characterizations are 
randomized benchmarking (RB)~\cite{Emerson2005,Levi2007,Knill2008,Dankert2009,Magesan2011} and variants thereof~\cite{Emerson2007,Magesan2012,Wallman2015,Wallman2015b,Wallman2016,Carignan-Dugas2015,Cross2016}. 
The RB protocol of Ref.~\cite{Magesan2011} provides an estimate of the average fidelity over a group of operations from how the fidelity of sequences of operations decays as more operations are applied. 
RB protocols scale efficiently with the number of qubits and are robust to 
state-preparation and measurement (SPAM) errors. Consequently, RB protocols have become an important baseline for the validation and verification of quantum operations~\cite{Corcoles2013,Barends2014} and have recently been used to efficiently optimize over experimental control procedures~\cite{Kelly2014}.

However, rigorous analyses of RB protocols make several key assumptions, namely, that the noise is Markovian and time- and gate-independent, so that RB decay curves provide somewhat heuristic estimates of noise parameters (as do all noise characterization protocols).
Numerical simulations of RB experiments have shown that the standard single-exponential decay curve holds (approximately) for a variety of physically relevant noise models with gate-dependent and non-Markovian noise, although the `true' and estimated fidelity often differ by a factor of approximately two~\cite{Epstein2014,Ball2016}.
However, it has recently been noted that gate-dependent noise fluctuations can dominate the gate-independent decay curve, resulting in a different decay parameter than expected when compared to a standard definition of the average noise~\cite{Proctor}.

In this paper, we prove that, under the assumption of Markovian noise, the full effect of gate-dependent fluctuations can be described by a single perturbation term that decays exponentially with the sequence length $m$. 
We also prove that the RB decay parameter is the fidelity of a suitably defined average noise process $\cE$. 
Informally, our main result (\cref{thm:GD_exact}) is that for gate-dependent, but trace-perserving and Markovian noise, the average survival probability over all RB sequences of length $m$ is
\begin{align}
Ap(\cE)^m + B + \epsilon_m
\end{align}
for some constants $A$ and $B$, where 
\begin{align}\label{eq:prelation}
p(\cE) &= \frac{d f(\cE,\cI)-1}{d-1} \notag\\
\rv\epsilon_m\lv &\leq \delta_1 \delta_2^m,
\end{align}
$\delta_1,\delta_2$ quantify the gate-dependence of the noise and $\delta_2$ is small for good implementations of gates by \cref{thm:smalldelta}. 
We prove all theorems without restricting to trace-preserving noise, as we anticipate that near-term implementations of RB on encoded qubits~\cite{Combes2017} will involve significant loss from the encoded space due to imperfect error-correction procedures.
We also demonstrate that the alternative analyses of Ref.~\cite{Magesan2011,
Magesan2012a, Chasseur2015, Proctor} do not rigorously justify fitting RB decays
to a single exponential in \cref{sec:RBalt}.

This paper is structured as follows. We introduce notation in \cref{sec:notation} and review the RB protocol of Ref.~\cite{Magesan2011} in \cref{sec:RBprotocol}.
We then discuss an ambiguity in the decomposition of noise in \cref{sec:GD_noise} and show how it can be exploited to cancel out the majority of gate-dependent fluctuations in \cref{sec:GD_RB}. 
We illustrate the accuracy of our new analysis numerically in \cref{sec:numerics} and show why alternative analyses of RB for gate-dependent noise do not justify fitting RB experiments to single exponential decays in \cref{sec:RBalt}.
We conclude by discussing some implications of our results for implementing and interpreting RB.

\section{Notation}\label{sec:notation}

We use the following notation throughout this paper.
\begin{itemize}
\item All operators except density operators are denoted by Roman font (e.g., $A$). 
The normalized identity matrix is $\hato_d = I_d/\sqrt{d}$.
\item Ideal channels are denoted by calligraphic font (e.g., $\cC$), where the 
ideal channel $\cU$ corresponding to a unitary operator $U$ is 
$\cU(\rho) = U\rho U\ct$.
\item The unital component of a trace-preserving map is the linear map defined by $\cG_{\mathrm{u}}(A) = \cG(A-\tr(A)/d)$.
\item A noisy implementation of an ideal channel is denoted with an overset
$\tilde{}$ (e.g., $\tilde{\cC}$ denotes the noisy implementation of $\cC$).
\item Channel composition is denoted by (noncommutative) multiplication 
(i.e., $\cC\cB(A) = \cC[\cB(A)]$).
\item Noncommutative products of subscripted objects are denoted by the 
shorthand
\begin{align}
x_{b:a} &= \begin{cases}
x_b x_{b-1}\ldots x_{a+1} x_a & b\geq a, \\
1 & \mbox{otherwise.} \end{cases}
\end{align}
\item Groups and sets will be denoted by blackboard font, e.g., $\bbG$.
\item The uniform average over a set $\bbX$ is denoted by $\bbE_{x\in\bbX} [f(x)] =|X|^{-1}\sum_{x\in \bbX}f(x)$.
\item The trace and operator norms of a matrix are $\lV M \rV_1 = \tr\sqrt{M\ct M}$ and $\lV M \rV = \max_v \lV M v \rV/\lV v\rV$ respectively.
\item The induced trace norm of a linear map $\cL$ is
\begin{align}
\lV \cL \rV_{1\to 1} = \sup_{M:\lV M \rV_1=1} \lV \cL(M) 
\rV_1,
\end{align}
where the maximization is over positive-semidefinite matrices.
\end{itemize}

We use the following matrix representation of quantum channels (variously known as the Liouville, natural and Pauli transfer matrix representation) wherever a concrete representation is required. 
Let $\{e_1,\ldots,e_n\}$ be the canonical orthonormal unit basis of $\bbC^n$ and $\{B_1,\ldots,B_{d^2}\}$ be a trace-orthonormal basis of $\bbC^{d\times d}$, that is, $\tr(B_j\ct B_k) = \delta_{j,k}$. 
Defining the linear map $|*\raa:\bbC^{d\times d}\to\bbC^{d^2}$ by 
\begin{align} 
|A\raa = \sum_j \tr(B_j\ct A)e_j 
\end{align} 
and the adjoint map $\laa A| = |A\raa\ct$, we have $\laa A|B\raa = \tr(A\ct B)$. 
A channel $\cC$ maps $\rho$ to $\cC(\rho)$, which can be represented 
by the matrix 
\begin{align}\label{eq:matrix_rep} 
\cC = \sum_j |\cC(B_j)\raa\!\laa B_j|, 
\end{align} 
where we abuse notation slightly by using the same notation for the abstract channel and its matrix representation. 
For the remainder of the paper, we assume  the operator basis is Hermitian and that $B_1 = \hat{I}_d$, so that the matrix representation of all Hermiticity-preserving maps, including all completely positive and trace-preserving (CPTP) maps, are real-valued. 

\section{Randomized benchmarking protocol}\label{sec:RBprotocol}

The RB protocol for a group of operations $\bbG$ that is also a unitary two-design~\cite{Dankert2009} (e.g., the $n$-qubit Clifford group) is as follows. 
\begin{enumerate}
\item Choose a positive integer $m$.
\item Choose a sequence $\vec{G} = (G_1,\ldots,G_m)\in\bbG^m$ of $m$ elements of $\bbG$ uniformly at random and set $G_{m+1} = (G_{m:1})\ct$.
\item Estimate the expectation value (also known as the survival probability) $Q_{\vec{G}} = \tr[Q\tilde{\cG}_{m+1:1}(\rho)]$ of an observable $Q$ after preparing a $d$-level system in a state $\rho$ and applying the sequence of $m+1$ operations $G_1, G_2,\ldots,G_{m+1}$.
\item Repeat steps 2--3 to estimate the average over random sequences 
	$\bbE_{\vec{G}\in\bbG^m}(Q_{\vec{G}})$ to a desired precision.
\item Repeat steps 1--4 for different values of $m$ and fit to the decay model 
\begin{align}\label{eq:fidelity_decay}
\bbE_{\vec{G}\in\bbG^m}(Q_{\vec{G}}) = Ap^m + B
\end{align}
to estimate $p$~\cite{Magesan2011}.
\end{enumerate}

Despite the ubiquitous usage of RB, best practice choices for the sequence lengths and the number of sequences at each length are still unknown, though see Refs.~\cite{Epstein2014,Wallman2014,Granade2014} for some guidance.

A unitary two-design is any set of unitary channels $\bbG$ such that uniformly sampling $\bbG$ reproduces the first and second moments of the Haar measure.
The canonical example of a unitary two-design is the $n$-qubit Clifford group.
The unitary two-design condition can also be stated in terms of how a general channel is `twirled' by the group, where the  twirl of a channel $\cC$ over a group $\bbG$ is $\bbE_{G\in\bbG} (\cG\ct \cC\cG)$.
A unitary two-design is any group $\bbG$ such that any CPTP channel is `twirled' into a depolarizing channel~\cite{Dankert2009,Nielsen2002}
\begin{align}
\cD_p(\rho) = p\rho + \frac{1-p}{d}I_d.
\end{align}
Unitary two-designs also twirl  a general linear map $\cC$ into a depolarizing 
map composed with a channel that uniformly decreases the trace, 
that is, into a channel of the form
\begin{align}
\cD_{p,t}(\rho) = \frac{t}{d}I_d + p(\rho-\frac{1}{d}I_d),
\end{align}
where the values of $p$ and $t$ can be computed from the following lemma~\cite{Wallman2014}.
Note that $p$ and $t$ are functions from linear maps to the real numbers, however, we will often omit the argument when it is clear from the context.

\begin{lem}\label{lem:twirl}
For any unitary two-design $\bbG$ and channel $\cC$,
\begin{align}
\bbE_{G\in\bbG} (\cG\ct \cC\cG) = \cD_{p,t},
\end{align}
where
\begin{align}\label{eq:decay_parameter}
t(\cC) &= \frac{\tr[\cC(I_d)]}{d} \notag\\
p(\cC) &= \frac{\tr(\cC) - t}{d^2-1} 
= \frac{df(\cC) - t}{d-1}.
\end{align}
\end{lem}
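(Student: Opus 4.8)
The plan is to exploit the defining property of a unitary two-design, namely that the twirl $\bbE_{G\in\bbG}(\cG\ct\cC\cG)$ coincides with the Haar twirl $\int dU\,(\cU\ct\cC\cU)$ for every linear map $\cC$. Averaging conjugation over a compact group is the orthogonal projection onto the subspace of superoperators left invariant by $\cC\mapsto\cU\ct\cC\cU$ for all $U$, so Schur's lemma applied to the adjoint representation $U\mapsto\cU$ — which decomposes as the trivial representation on $\mathrm{span}(I_d)$ plus the irreducible $(d^2-1)$-dimensional representation on traceless Hermitian matrices — shows that this invariant subspace is exactly two-dimensional. A convenient basis is the identity superoperator $\cI$ together with the fully depolarizing map $\cT(\rho)=\tr(\rho)I_d/d$, so the twirl must equal $\alpha\cI+\beta\cT$ for scalars $\alpha,\beta$ depending linearly on $\cC$. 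Matching against the stated form gives $\cD_{p,t}=p\,\cI+(t-p)\,\cT$, i.e.\ $\alpha=p$ and $\beta=t-p$; it then remains only to determine $\alpha$ and $\beta$.

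First I would pin down the two coordinates using two linear functionals that are twirl-invariant and that separate $\cI$ from $\cT$. The superoperator trace is invariant under conjugation, so $\tr(\bbE_{G}(\cG\ct\cC\cG))=\tr(\cC)$, and a direct count gives $\tr(\cI)=d^2$ and $\tr(\cT)=1$. The second functional is $t(\cdot)$ itself, which I would rewrite as $t(\cC)=\laa B_1|\cC|B_1\raa$ using $|I_d\raa=\sqrt{d}\,|B_1\raa$; it is twirl-invariant because $\cG|I_d\raa=|I_d\raa$ and $\laa I_d|\cG\ct=\laa I_d|$, and it takes the value $1$ on both $\cI$ and $\cT$. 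Evaluating $\alpha\cI+\beta\cT$ under both functionals yields the linear system $\tr(\cC)=\alpha d^2+\beta$ and $t(\cC)=\alpha+\beta$, whose solution is $\alpha=(\tr(\cC)-t)/(d^2-1)=p$, establishing the first expression for $p$ and confirming that $t$ is the quantity in the statement.

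For the second expression I would establish the relation $\tr(\cC)=d(d+1)f(\cC)-dt$ linking the superoperator trace to the average fidelity. Writing $f(\cC)=\int d\psi\,\laa\psi|\cC|\psi\raa=\tr[\cC\,\Phi]$ with $\Phi=\int d\psi\,|\psi\raa\!\laa\psi|$, the operator $\Phi$ is again conjugation-invariant and hence equal to $a\cI+b\cT$; the same two functionals give $\tr(\Phi)=\int d\psi\,\tr(\psi^2)=1$ and $t(\Phi)=d^{-1}\int d\psi\,|\laa I_d|\psi\raa|^2=d^{-1}$, so that $a=1/[d(d+1)]$ and $b=1/(d+1)$. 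Then $f(\cC)=a\,\tr(\cC)+b\,t$ (using $\tr[\cC\,\cT]=t$), and substituting the resulting $\tr(\cC)$ into $p=(\tr(\cC)-t)/(d^2-1)$ collapses, after factoring $(d+1)$ from numerator and denominator, to $p=(df(\cC)-t)/(d-1)$.

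The conceptual core is the representation-theoretic first step, but since the two-design reduction and the two-dimensionality of the commutant are already asserted in the text preceding the lemma, the real work is arithmetic: verifying the invariance and values of the two functionals and carrying out the second Haar integral for $\Phi$. I expect the main obstacle to be bookkeeping the normalization conventions — the factors of $\sqrt{d}$ from $B_1=\hat I_d$ and the appearance of the \emph{unnormalized} $I_d$ inside both $t$ and $f$ — so that the constants come out exactly, and ensuring every step remains valid for a general linear, not necessarily trace-preserving, map $\cC$, for which $t$ genuinely differs from $1$.
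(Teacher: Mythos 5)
Your proof is correct. The paper does not actually prove \cref{lem:twirl} itself---it imports it from Ref.~\cite{Wallman2014}---and your argument (two-design reduces the twirl to the Haar twirl, Schur's lemma gives a two-dimensional commutant spanned by $\cI$ and $\cT$, the coefficients are fixed by the invariant functionals $\tr(\cdot)$ and $t(\cdot)$, and the Haar average of $|\psi\raa\!\laa\psi|$ supplies the relation $\tr(\cC)=d(d+1)f(\cC)-dt$) is exactly the standard derivation used in that reference, with all constants checking out for general linear, not necessarily trace-preserving, $\cC$.
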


In \cref{thm:GD_exact}, we prove that \cref{eq:fidelity_decay} holds, up to a
small and exponentially decaying perturbation, for gate-dependent noise that is time-independent and Markovian, generalizing the original analysis to more realistic noise. 

\section{Representing gate-dependent noise}\label{sec:GD_noise}

Let $\tilde{\cG}$ be an experimental implementation of an ideal unitary channel $\cG(\rho) = G\rho G\ct$ with Markovian noise.
It is convenient to decompose $\tilde{\cG}$ into an ideal gate and a fictitious noise process, for example, $\tilde{\cG} = \cG\cE$.
However, we can also write $\tilde{\cG} = \cL_G \cG \cR_G$ for any suitable $\cL_G$ and $\cR_G$.
We now identify an approximate decomposition that will greatly simplify the analysis of RB with gate-dependent noise.

\begin{thm}\label{thm:avnoise}
	Let $\bbG$ be a unitary two-design, $\{\tilde{\cG}:G\in\bbG\}$ be a corresponding set of Hermiticity-preserving maps, and $p$ and $t$ be the largest eigenvalues of $\bbE_{G\in\bbG} (\cG_{\mathrm{u}} \otimes \tilde{\cG})$ and $\bbE_{G\in\bbG} (\tilde{\cG})$ in absolute value respectively, where $\cG_{\mathrm{u}}(A) = \cG(A-\tr(A)/d)$.
There exist linear maps $\cL$ and $\cR$ such that
\begin{subequations}\label{eq:conditions}
\begin{align}
\bbE_{G\in\bbG}(\tilde{\cG}\cL\cG\ct) &= \cL \cD_{p,t} \label{eq:lcona}\\
\bbE_{G\in\bbG}(\cG\ct \cR\tilde{\cG}) &= \cD_{p,t}\cR \label{eq:rcona}\\
\bbE_{G\in\bbG}(\cG\cR\cL\cG\ct) &= \cD_{p,t} \label{eq:LRscale}.
\end{align}
\end{subequations}
\end{thm}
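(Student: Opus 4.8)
The plan is to work throughout in the matrix representation of \cref{eq:matrix_rep} and exploit the block structure induced by the trace direction. Write $P_0 = |\hat{I}_d\raa\!\laa\hat{I}_d|$ for the projector onto the identity direction and $P_1 = \mathbb{1} - P_0$ for its complement, where $\mathbb{1}$ is the identity superoperator; then $\cD_{p,t} = tP_0 + pP_1$. Since $\cG$ is unitary, $\cG|\hat{I}_d\raa = |\hat{I}_d\raa = \cG\ct|\hat{I}_d\raa$, so both $\cG$ and $\cG\ct$ commute with $P_0$ and $P_1$, we have $\cG P_0 = P_0\cG\ct = P_0$, and $\cG_{\mathrm{u}} = P_1\cG = \cG P_1$, whence $\cG_{\mathrm{u}}\ct P_0 = (\cG\ct - P_0)P_0 = 0$. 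I would use these identities repeatedly to kill cross terms.

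First I would reduce the two linear conditions to eigenvector equations. Right-multiplying \cref{eq:lcona} by $P_0$ and using $\cG\ct P_0 = P_0$ gives $\bbE_{G\in\bbG}(\tilde{\cG})\,\cL|\hat{I}_d\raa = t\,\cL|\hat{I}_d\raa$, so the first column $\cL|\hat{I}_d\raa$ is a right $t$-eigenvector of $\bbE_{G\in\bbG}(\tilde{\cG})$; right-multiplying by $P_1$, using $\cG\ct P_1 = \cG_{\mathrm{u}}\ct$ and $P_0\cG\ct P_1 = 0$ to isolate $\cL P_1$, and vectorizing via $\opv(AXB) = (B^T\otimes A)\opv X$ with $(\cG_{\mathrm{u}}\ct)^T = \cG_{\mathrm{u}}$ yields $\bbE_{G\in\bbG}(\cG_{\mathrm{u}}\otimes\tilde{\cG})\,\opv(\cL P_1) = p\,\opv(\cL P_1)$. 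The mirror-image computation for \cref{eq:rcona}, now left-multiplying by $P_0$ and $P_1$ and using $\cG_{\mathrm{u}}\ct P_0 = 0$, shows that the first row $\laa\hat{I}_d|\cR$ is a left $t$-eigenvector of $\bbE_{G\in\bbG}(\tilde{\cG})$ and that $\opv(P_1\cR)$ is an eigenvector at eigenvalue $p$ of the transpose of $\bbE_{G\in\bbG}(\tilde{\cG}\otimes\cG_{\mathrm{u}})$, an operator cospectral with $\bbE_{G\in\bbG}(\cG_{\mathrm{u}}\otimes\tilde{\cG})$. Because $t$ and $p$ are, by hypothesis, the largest-magnitude eigenvalues of precisely $\bbE_{G\in\bbG}(\tilde{\cG})$ and $\bbE_{G\in\bbG}(\cG_{\mathrm{u}}\otimes\tilde{\cG})$, and a matrix shares its spectrum with its transpose, all four eigenvectors exist and I would assemble $\cL$ and $\cR$ from them. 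Since the $P_0$ and $P_1$ projections decouple, the identity and traceless blocks of $\cL$ and of $\cR$ may be rescaled independently without disturbing \cref{eq:lcona,eq:rcona}.

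It remains to fix these block scales so that \cref{eq:LRscale} holds. By \cref{lem:twirl} (applied after relabelling $G\mapsto G\ct$, a symmetry of the group average), $\bbE_{G\in\bbG}(\cG\cR\cL\cG\ct) = \cD_{p(\cR\cL),\,t(\cR\cL)}$ automatically for any $\cR$ and $\cL$, so the content of \cref{eq:LRscale} is only the two scalar conditions $t(\cR\cL)=t$ and $p(\cR\cL)=p$. From \cref{lem:twirl} I would read off $t(\cR\cL) = \laa\hat{I}_d|\cR\cL|\hat{I}_d\raa$, which depends only on the product of the two $P_0$-block scales, and $p(\cR\cL) = [\tr(\cR\cL) - t(\cR\cL)]/(d^2-1)$, whose numerator equals $\tr\!\big((P_1\cR)(\cL P_1)\big)$ and so depends only on the product of the two $P_1$-block scales; these two knobs then fix both conditions simultaneously.

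The main obstacle is guaranteeing that this rescaling can actually reach the targets, i.e. that $t(\cR\cL)$ and $\tr(\cR\cL)-t(\cR\cL)$ are nonzero before scaling. Here $t(\cR\cL) = (\laa\hat{I}_d|\cR)(\cL|\hat{I}_d\raa)$ pairs the left and right $t$-eigenvectors of $\bbE_{G\in\bbG}(\tilde{\cG})$, while $\tr\!\big((P_1\cR)(\cL P_1)\big)$ pairs the $p$-eigenvectors of the maps $\Phi(\cX) = \bbE_{G\in\bbG}(\tilde{\cG}\,\cX\,\cG_{\mathrm{u}}\ct)$ and $\Phi'(\cY) = \bbE_{G\in\bbG}(\cG_{\mathrm{u}}\ct\,\cY\,\tilde{\cG})$, which are adjoint with respect to the trace pairing $\tr(\cY\cX)$ by cyclicity of the trace. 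I would therefore invoke biorthogonality: for a simple dominant eigenvalue the left and right eigenvectors — equivalently the eigenvectors of a map and of its adjoint — pair non-trivially, so both quantities are nonzero and the scales can be chosen to enforce $t(\cR\cL)=t$ and $p(\cR\cL)=p$. This nondegeneracy of the leading eigenvalues is exactly what the ``largest eigenvalue in absolute value'' hypotheses supply, and isolating it is the crux of the argument; everything else is linear algebra driven by the identities $\cG P_0 = P_0$ and $\cG_{\mathrm{u}}\ct P_0 = 0$.
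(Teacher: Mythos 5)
Your proof is correct and follows essentially the same route as the paper's: splitting $\cL$ and $\cR$ into their identity and traceless blocks (your $P_0$, $P_1$ projections are exactly the paper's decomposition $\cL = |L\raa\!\laa\hat{I}_d| + \cL'$, $\cR = |\hat{I}_d\raa\!\laa R| + \cR'$), reading off left/right $t$-eigenvector equations for $\bbE_{G\in\bbG}(\tilde{\cG})$ and, after vectorization, $p$-eigenvector equations for $\bbE_{G\in\bbG}(\cG_{\mathrm{u}}\otimes\tilde{\cG})$ and its cospectral partner, and finally rescaling the two blocks independently to enforce \cref{eq:LRscale}. Your biorthogonality argument for why the two pairings are nonzero (so that the rescaling can actually reach $t$ and $p$) addresses a point the paper's proof passes over silently; note that it requires the leading eigenvalues to be nondefective, which is the same genericity the paper's remark appeals to when asserting that the largest eigenvalues ``will generally be unique.''
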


\begin{rem}
When the noise is independent of the gate, that is, $\tilde{\cG} = \cL\cG\cR$ for all $G\in\bbG$ for some $\cL$ and $\cR$, then $\cL$ and $\cR$ are solutions to \cref{eq:conditions}, as can be seen by substituting $\tilde{\cG} = \cL\cG\cR$ into \cref{eq:conditions} and using \cref{lem:twirl}.

For gate-dependent noise such that $\cR$ is invertible, we can set $\tilde{\cG} = \cL_G\cG\cR$, so that the noise between two ideal gates $\cG$ and $\cH$ is $\cR\cL_G$. 
Substituting $\tilde{\cG} = \cL_G\cG\cR$ into \cref{eq:rcona} and multiplying by $\cR^{-1}$ gives
\begin{align}
\bbE_{G\in\bbG}(\cG\ct \cR \cL_G \cG) = \cD_{p,t}.
\end{align}
As the Haar measure is unitarily invariant and $p$ and $t$ are convex functions, 
\begin{align}\label{eq:avGDnoise}
p = p(\bbE_{G\in\bbG} [\cG\ct\cR\cL_G\cG]) = \bbE_{G\in\bbG} (p[\cR\cL_G]) \notag\\
t = t(\bbE_{G\in\bbG} [\cG\ct\cR\cL_G\cG]) = \bbE_{G\in\bbG} (t[\cR\cL_G])
\end{align}
give the fidelity and loss of trace of the average noise between ideal gates to the identity via \cref{lem:twirl}.
If $\cR$ is not invertible, we can introduce an arbitrarily small perturbation into the $\tilde{\cG}$ to make $\cR$ invertible and so \cref{eq:avGDnoise} will hold to an arbitrary precision.

When $\tilde{\cG}= \cG\cD_{p,t}$, $\bbE_{G\in\bbG} (\cG_{\mathrm{u}} \otimes \tilde{\cG})$ and $\bbE_{G\in\bbG} (\tilde{\cG})$ each have only one nonzero eigenvalue, namely, $p$ and $t$ respectively~\cite{Wallman2015,Wallman2015b}.
By the Bauer-Fike theorem~\cite{Bauer1960}, all the eigenvalues $\eta$ of a matrix $M+\delta M$ satisfy 
\begin{align} 
\min_{\lambda\in\spec(M)} \lv \eta -\lambda\rv \leq \lV\delta M\rV, 
\end{align} 
where $\spec(M)$ is the spectrum of $M$.  
As the operator norm is submultiplicative, obeys the triangle inequality, and $\lV A\otimes B\rV = \lV A\rV \lV B\rV$, the largest eigenvalues of $\bbE_{G\in\bbG} (\cG_{\mathrm{u}} \otimes \tilde{\cG})$ and $\bbE_{G\in\bbG} (\tilde{\cG})$ will differ from $p'$ and $t'$ by at most $\bbE_{G\in\bbG} \lV \tilde{\cG} - \cG\cD_{p',t'}\rV$ for any scalars $p'$ and $t'$.
Therefore the largest eigenvalues in absolute value will generally be unique and hence must be real as they are the eigenvalues of real matrices.
\end{rem}

\begin{proof}
\Cref{eq:lcona,eq:rcona} are essentially a pair of eigenvalue equations, except that $\bbG$ acts irreducibly on $|\hato_d\raa$ and its orthogonal complement.
To utilize this structure, let $\cL = |L\raa\!\laa \hato_d| + \cL'$ and
$\cR = |\hato_d\raa\!\laa R| + \cR'$ with $\cL'|\hato_d\raa=0$ and 
$\laa \hato_d|\cR' = 0$ without loss of generality.
With these definitions and as $\cG|\hato_d\raa = |\hato_d\raa$ for any unitary (or unital) channel, we can separate each equation in \cref{eq:lcona,eq:rcona} into two independent equations 
\begin{subequations}\label{eq:mncon}
\begin{align}
\bbE_{G\in\bbG} (\tilde{\cG})|L\raa &= t|L\raa \label{eq:m2con}\\
\bbE_{G\in\bbG} (\tilde{\cG} \cL' \cG_{\mathrm{u}}\ct) &= p\cL' \label{eq:n2con}\\
\laa R|\bbE_{G\in\bbG} (\tilde{\cG}) &= t\laa R| \label{eq:m1con}\\
\bbE_{G\in\bbG} (\cG_{\mathrm{u}}\ct \cR' \tilde{\cG}) &= p\cR' \label{eq:n1con},
\end{align}
\end{subequations}
where restricting to $\cG_{\mathrm{u}}$ enforces the orthogonality constraints on
$\cL'|\hato_d\raa=0$ and $\laa \hato_d|\cR'=0$. 
\Cref{eq:m1con,eq:m2con} are left- and right-eigenvector problems of a real matrix and so $t$ can be set to any eigenvalue of $\bbE_{G\in\bbG} (\tilde{\cG})$ and $L$ and $R$ to the corresponding non-trivial eigenvectors (recall that the set of left- and right-eigenvalues of a real matrix are identical).
We choose the largest eigenvalue as it will result in the smallest $\Delta_G = \tilde{\cG} - \cL\cG\cR$.

To turn \cref{eq:n1con,eq:n2con} into eigenvalue equations, we can use the
vectorization map that maps a matrix to a vector by stacking the columns
vertically, that is,
\begin{align}
\opv(\sum_j v_j e_j\ct) = \sum_j e_j\otimes v_j.
\end{align}
The vectorization map satisfies the identity
\begin{align}\label{eq:vec_product}
\opv (ABC) = (C^T\otimes A)\opv (B),
\end{align}
Applying \cref{eq:vec_product} to \cref{eq:n1con,eq:n2con} and using $\cG^T = 
\cG\ct$ (as the matrix basis is Hermitian) gives
\begin{align}
\bbE_{G\in\bbG} (\cG_{\mathrm{u}} \otimes \tilde{\cG})\opv(\cL') &= p \opv(\cL') \notag\\
\bbE_{G\in\bbG} (\tilde{\cG}\otimes \cG_{\mathrm{u}})^T\opv(\cR') &= p \opv(\cR'),\label{eq:primeconditions}
\end{align}
and so $p$ can be set to any eigenvalue of $\bbE_{G\in\bbG} (\cG_{\mathrm{u}} \otimes \tilde{\cG})$ and $\cL'$ and $\cR'$ to the corresponding non-trivial eigenvectors (or equivalently, to any eigenvalue of $\bbE_{G\in\bbG} (\tilde{\cG}\otimes \cG_{\mathrm{u}})$ as the two are related by a unitary transformation and so have the same spectrum).

The maps $\cL$ and $\cR$ identified are solutions to eigenvector equations and so are only determined up to a normalization constant. 
As $\cD_{q,u}$ commutes with $\cG$ for all scalars $q$ and $v$ and all $\cG\in\bbG$, we can set $\cL\to \cL D_{q,u}$ and $\cR\to \cD_{r,v}\cR$ to satisfy \cref{eq:LRscale}.
\end{proof} 

While applying a gauge transformation $\tilde{\cG}\to\cS^{-1}\tilde{\cG}\cS$ changes $\cL$ and $\cR$ to $\cL\to \cS\cL$ and $\cR\to\cR\cS^{-1}$ respectively, the noise between ideal gates, namely, $\cR\cL$, is gauge invariant and so the present analysis is gauge-invariant unlike the original analysis of RB as shown in Ref.~\cite{Proctor}.

We now prove that the solutions $\cL$ and $\cR$ to \cref{eq:conditions} can also be chosen to ensure that $\Delta_G = \tilde{\cG} - \cL\cG\cR$ is small when $\tilde{\cG}\approx\cG\cD_{p,t}$. To prove this, it is convenient to transform to the gauge $\tilde{\cG}^{\cI}$ such that $\cL=\cI$ (i.e., set $\cS = \cL^{-1}$).
In this gauge, \cref{eq:LRscale} becomes 
\begin{align}\label{eq:Rscale}
\bbE_{G\in\bbG}(\cG\cR\cG\ct) = \cD_{p,t}.
\end{align}
In particular, $\Delta_G$ is on the order of $\lV\tilde{\cG}-\cG\cD_{p,t}\rV\in O(\sqrt{1-p})$.

\begin{thm}\label{thm:smalldelta}
Let $\bbG$ be a unitary two-design, $\{\tilde{\cG}:G\in\bbG\}$ be a corresponding set of Hermiticity-preserving maps, and $p$ and $t$ be the largest eigenvalues of $\bbE_{G\in\bbG} (\cG_{\mathrm{u}} \otimes \tilde{\cG})$ and $\bbE_{G\in\bbG} (\tilde{\cG})$ in absolute value respectively, where $\cG_{\mathrm{u}}(A) = \cG(A-\tr(A)/d)$.
There exists a linear map $\cR$ satisfying \cref{eq:conditions} with $\cL=\cI$
such that
\begin{align}
\lV\tilde{\cG}^{\cI} - \cG\cR\rV \leq 
\lV\tilde{\cG}^{\cI} - \cG\cD_{p,t}\rV + \frac{\lV \bbE_{G\in\bbG} (\cG\ct\tilde{\cG}^{\cI}) - \cD_{p,t}\rV }{1 - \bbE_{G\in\bbG} (\lV \tilde{\cG}^{\cI} - \cG\cD_{p,t}\rV)\lV\cD_{1/p,1/t}\rV}.
\end{align}
\end{thm}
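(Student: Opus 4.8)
The plan is to carry out the argument in the gauge $\cL=\cI$ furnished by \cref{thm:avnoise}, where the defining relations read $\bbE_{G\in\bbG}(\tilde{\cG}^{\cI}\cG\ct)=\cD_{p,t}$, the normalization $\bbE_{G\in\bbG}(\cG\cR\cG\ct)=\cD_{p,t}$ of \cref{eq:Rscale}, and $\bbE_{G\in\bbG}(\cG\ct\cR\tilde{\cG}^{\cI})=\cD_{p,t}\cR$ from \cref{eq:rcona}. The central idea is to read the last relation as a unit-eigenvalue fixed-point equation for $\cR$ and to compare it with the same equation in which $\tilde{\cG}^{\cI}$ is replaced by its ideal counterpart $\cG\cD_{p,t}$; the difference is then controlled by a geometric series whose ratio is exactly the gate-dependence measure appearing in the denominator.

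Concretely, I would introduce the two linear maps on superoperator space $\cM(\cX)=\cD_{1/p,1/t}\bbE_{G\in\bbG}(\cG\ct\cX\tilde{\cG}^{\cI})$ and $\cM_0(\cX)=\cD_{1/p,1/t}\bbE_{G\in\bbG}(\cG\ct\cX\cG\cD_{p,t})$, which are well defined whenever $p,t\neq0$ (the regime of good gates). Left-multiplying \cref{eq:rcona} by $\cD_{1/p,1/t}$ gives $\cM(\cR)=\cR$. Using that $\cD_{p,t}$ commutes with every $\cG$, that $\bbE_{G\in\bbG}(\cG\ct\cR\cG)=\bbE_{G\in\bbG}(\cG\cR\cG\ct)=\cD_{p,t}$ by \cref{eq:Rscale} (reindexing $G\mapsto G\ct$), and that $\cD_{1/p,1/t}\cD_{p,t}=\cI$, I obtain the three identities $\cM_0(\cR)=\cD_{p,t}$, $\cM_0(\cD_{p,t})=\cD_{p,t}$, and $\cM(\cD_{p,t})=\bbE_{G\in\bbG}(\cG\ct\tilde{\cG}^{\cI})$.

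The key step is then the exact cancellation
\begin{align}
\cR - \cD_{p,t} = \cM(\cR) - \cM_0(\cR) = (\cM-\cM_0)(\cR).
\end{align}
Writing $\cW=\cR-\cD_{p,t}$ and expanding $(\cM-\cM_0)(\cD_{p,t}+\cW)$, the source term is $(\cM-\cM_0)(\cD_{p,t})=\bbE_{G\in\bbG}(\cG\ct\tilde{\cG}^{\cI})-\cD_{p,t}$, so $\cW=(\cM-\cM_0)(\cD_{p,t})+(\cM-\cM_0)(\cW)$. Since $(\cM-\cM_0)(\cX)=\cD_{1/p,1/t}\bbE_{G\in\bbG}(\cG\ct\cX(\tilde{\cG}^{\cI}-\cG\cD_{p,t}))$, submultiplicativity of the operator norm with $\lV\cG\ct\rV=1$ yields the induced-norm bound $\lV\cM-\cM_0\rV\leq\bbE_{G\in\bbG}(\lV\tilde{\cG}^{\cI}-\cG\cD_{p,t}\rV)\lV\cD_{1/p,1/t}\rV$. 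When this is below $1$, the Neumann series inverts $\cI-(\cM-\cM_0)$ and bounds $\lV\cW\rV$ by the stated ratio; the triangle inequality $\lV\tilde{\cG}^{\cI}-\cG\cR\rV\leq\lV\tilde{\cG}^{\cI}-\cG\cD_{p,t}\rV+\lV\cG(\cD_{p,t}-\cR)\rV$ together with the orthogonal invariance $\lV\cG\cW\rV=\lV\cW\rV$ then finishes the proof.

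The hard part is not the estimation, which is a routine geometric-series bound once the setup is in place, but rather spotting the cancellation $\cR-\cD_{p,t}=(\cM-\cM_0)(\cR)$. This requires invoking both defining relations simultaneously: the eigenvector equation \cref{eq:rcona} is what gives $\cM(\cR)=\cR$, while the normalization \cref{eq:Rscale} is precisely what forces $\cM_0(\cR)=\cD_{p,t}$ rather than some other depolarizing map, so that the two fixed points are anchored at the same reference. I would also verify the genuine hypotheses that $p,t\neq0$ and that the gate-dependence is small enough to make $\lV\cM-\cM_0\rV<1$, i.e.\ that the denominator in the bound is positive; these are exactly the conditions under which the statement is meaningful.
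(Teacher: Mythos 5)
Your proposal is correct and is essentially the paper's own argument in different notation: the fixed-point identity $\cW=(\cM-\cM_0)(\cD_{p,t})+(\cM-\cM_0)(\cW)$ is exactly the paper's expansion $\cR_1=\bbE_{G\in\bbG}(\cG\ct\tilde{\cG}_1)+\cD_{1/p,1/t}\bbE_{G\in\bbG}(\cG\ct\cR_1\tilde{\cG}_1)$ obtained by substituting $\cR=\cD_{p,t}+\cR_1$ and $\tilde{\cG}^{\cI}=\cG\cD_{p,t}+\tilde{\cG}_1$ into \cref{eq:rcona} and using \cref{eq:Rscale}, and your Neumann-series step is the paper's rearrangement of the resulting self-consistent norm inequality. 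Your explicit flagging of the hypotheses $p,t\neq 0$ and of positivity of the denominator is a minor but worthwhile addition the paper leaves implicit.
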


\begin{proof}
By the triangle inequality, submultiplicativity, and unitary invariance of the operator norm,
\begin{align}\label{eq:bound_terms} 
\lV \tilde{\cG}^{\cI} - \cG\cR\rV
&\leq \lV \tilde{\cG}^{\cI} - \cG\cD_{p,t}\rV + \lV \cG(\cD_{p,t}-\cR) \rV \notag\\
&\leq \lV \tilde{\cG}^{\cI} - \cG\cD_{p,t}\rV + \lV \cD_{p,t}-\cR \rV
\end{align} 
for all $G\in\bbG$.

\Cref{eq:bound_terms} holds for any $\cR$. 
Now let $\cR$ be a solution to \cref{eq:conditions} for $\tilde{\cG}^{\cI}$ and expand $\cR = \cD_{p,t} + \cR_1$ and $\tilde{\cG}^{\cI} = \cG\cD_{p,t} + \tilde{\cG}_1$. 
Substituting these expansions into \cref{eq:rcona} and using \cref{eq:Rscale} gives
\begin{align}\label{eq:pert_con}
\cD_{p,t}^2 + \cD_{p,t}\cR_1
&= \bbE_{G\in\bbG}(\cG\ct\cR\cG\cD_{p,t}) + \bbE_{G\in\bbG}(\cG\ct\cR\tilde{\cG}_1) \notag\\
&= \cD_{p,t}^2 + \bbE_{G\in\bbG}(\cG\ct\cR\tilde{\cG}_1)
\end{align}
Canceling the common term in \cref{eq:pert_con}, multiplying both sides by $\cD_{1/p,1/t}$ from the left, taking the operator norm and using the triangle inequality and submultiplicativity of the operator norm gives
\begin{align}
\lV \cR_1 \rV 
&\leq 
\lV \bbE_{G\in\bbG} (\cG\ct\tilde{\cG}_1)\rV 
+ \lV \bbE_{G\in\bbG} (\cD_{1/p,1/t} \cG\ct \cR_1\tilde{\cG}_1) \rV \notag\\
&\leq \lV \bbE_{G\in\bbG} (\cG\ct\tilde{\cG}_1)\rV + \bbE_{G\in\bbG} (\lV \tilde{\cG}_1\rV \lV\cR_1\rV \lV \cD_{1/p,1/t}\cG\ct\rV).
\end{align}
Rearranging and using the unitary invariance of the operator norm gives
\begin{align}
\lV \cR_1\rV &= \lV \cD_{p,t}-\cR \rV \notag\\
&\leq \frac{\lV \bbE_{G\in\bbG} (\cG\ct\tilde{\cG}_1)\rV }{1 - \bbE_{G\in\bbG} (\lV \tilde{\cG}_1\rV) \lV\cD_{1/p,1/t}\rV} \notag\\
&=\frac{\lV \bbE_{G\in\bbG} (\cG\ct\tilde{\cG}^{\cI}) - \cD_{p,t}\rV }{1 - \bbE_{G\in\bbG} (\lV \tilde{\cG}^{\cI} - \cG\cD_{p,t}\rV) \lV\cD_{1/p,1/t}\rV}.
\end{align}
\end{proof}

\section{Analyzing RB with arbitrarily gate-dependent noise}\label{sec:GD_RB}

Experimental implementations of RB almost always produce relatively clean exponential decays. 
We now explain this empirical success for general gate-dependent noise by proving that the average survival probability over RB sequences of length $m$ is equal to the model of Ref.~\cite{Magesan2012a} for gate-independent noise with a single perturbation term for a suitably defined average noise model. 
Moreover, the decay constants correspond to the average loss of trace and fidelity of the noise between ideal gates via \cref{eq:avGDnoise}.
The perturbation term will generally be negligible for noise that is close to the identity by \cref{thm:smalldelta} and for $m\geq 3$ (for numerical values for a single qubit, see \cref{fig:GDnumerics}).
Even under this condition, more significant (and potentially oscillating) perturbations are possible for $m\approx 1$, potentially explaining experimentally observed fluctuations as seen in, e.g., \cite[Fig.~3b]{Corcoles2013}.
While we have not investigated how the perturbation terms scale with dimension, note that we only need $\delta_2\lessapprox 1/2$ in order for the decay to be essentially a single exponential for $m\geq 10$, etc, as $\delta_2$ is exponentially suppressed.

\begin{thm}\label{thm:GD_exact}
For any unitary two-design $\bbG$ and set of Hermiticitiy preserving linear maps $\{\tilde{\cG}:G\in\bbG\}$, the average survival probability over all randomized benchmarking sequences of length $m$ is
\begin{align}\label{eq:exact_decay}
\bbE_{\vec{G}\in\bbG^m} (Q_{\vec{G}})
&= A p^m + B t^m + \epsilon_m
\end{align}
for some constants $A$ and $B$, where $p$ and $t$ give the fidelity and loss of trace of the average noise between ideal gates via \cref{eq:avGDnoise}. 
Moreover, the perturbation term $\epsilon_m$ satisfies
\begin{align}
\lv \epsilon_m \rv\leq \delta_1  \delta_2^m
\end{align}
for some $\delta_1$ and $\delta_2$ that quantify the amount of gate dependence.
\end{thm}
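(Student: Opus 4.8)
The plan is to work in the gauge $\cL=\cI$ provided by \cref{thm:avnoise,thm:smalldelta}, writing $\tilde{\cG}^{\cI}=\cG\cR+\Delta_G$ with $\Delta_G$ controlled by \cref{thm:smalldelta} and $\bbE_{G\in\bbG}(\cG\cR\cG\ct)=\cD_{p,t}$ from \cref{eq:Rscale}. Since a gauge transformation $\tilde{\cG}\to\cS^{-1}\tilde{\cG}\cS$ telescopes through $\bbE_{\vec G}(Q_{\vec G})=\laa Q|\bbE_{\vec G}(\tilde{\cG}_{m+1:1})|\rho\raa$ and merely relabels the boundary (SPAM) vectors, I may compute with $\tilde{\cG}^{\cI}$ throughout and absorb $\cS^{\pm1}$ into effective vectors $\laa Q'|,|\rho'\raa$. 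The first thing I would record is the pair of first-order cancellations that \cref{eq:conditions} encode in this gauge: substituting $\tilde{\cG}^{\cI}=\cG\cR+\Delta_G$ into \cref{eq:lcona,eq:rcona} and using \cref{eq:Rscale} gives $\bbE_{G\in\bbG}(\Delta_G\cG\ct)=0$ and $\bbE_{G\in\bbG}(\cG\ct\cR\Delta_G)=0$. These are exactly the identities that annihilate the terms containing a single interior $\Delta$.

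Next I would reparametrise the average over $\vec G$ by the cumulative products $\omega_j=\cG_{j:1}$, with $\omega_0=\omega_{m+1}=\cI$ because $G_{m+1}=(G_{m:1})\ct$. As $(G_1,\dots,G_m)\mapsto(\omega_1,\dots,\omega_m)$ is a bijection of $\bbG^m$, the $\omega_j$ are i.i.d.\ uniform. Writing $\cG_i=\omega_i\omega_{i-1}^{-1}$ and $\cN_G:=\cG\ct\tilde{\cG}^{\cI}=\cR+\cG\ct\Delta_G$, the sequence operator telescopes to
\[
\tilde{\cG}^{\cI}_{m+1:1}=\prod_{i=m+1}^{1}\omega_{i-1}^{-1}\,\cN_{G_i}\,\omega_{i-1},
\]
a product of conjugated single-gate noise maps coupled only through the shared bonds $\omega_i$. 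Setting $\cN_{G_i}\to\cR$ everywhere (the gate-independent term) decouples the bonds, so the average factorises into $\bbE_{\omega}(\omega^{-1}\cR\omega)=\cD_{p,t}$ on each interior site, giving $\cD_{p,t}^m\cR$; contracting with the boundary vectors and using $\cD_{p,t}=t\,|\hato_d\raa\!\laa\hato_d|+p(\cI-|\hato_d\raa\!\laa\hato_d|)$ produces exactly $A p^m+B t^m$, where $A,B$ are the overlaps of $\laa Q'|,|\rho'\raa$ with the two spectral projectors and $p,t$ are the average-noise parameters of \cref{eq:avGDnoise}.

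It remains to bound $\epsilon_m=\bbE_{\vec G}(Q_{\vec G})-(A p^m+B t^m)$, and this is where I expect the real work. I would organise the bond-averaged product as a transfer operator $\cT$ on $\bbG$-indexed families of superoperators, $(\cT P)(\omega)=\bbE_{\omega'}(\omega'^{-1}\cN_{\omega\omega'^{-1}}\omega'\,P(\omega'))$, so that $\bbE_{\vec G}(\tilde{\cG}^{\cI}_{m+1:1})$ is a fixed boundary contraction of $\cT^{m}$. The gate-independent piece $\cT_0$ (taking $\cN\to\cR$) sends every family to a constant one and acts there as left multiplication by $\cD_{p,t}$, so its only nonzero eigenvalues are $t$ and $p$ and its dominant action reproduces $\cD_{p,t}^m\cR$ and hence $A p^m+B t^m$. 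The content of \cref{eq:conditions}, in the form $\bbE_{G\in\bbG}(\Delta_G\cG\ct)=0$ and $\bbE_{G\in\bbG}(\cG\ct\cR\Delta_G)=0$, is precisely that the perturbation $\cT_1=\cT-\cT_0$ does not act on the dominant ($t$- and $p$-)eigenspaces to first order, so that every expansion term carrying a single interior $\Delta$ drops out; combined with the Bauer-Fike estimate already invoked in the remark, the dominant eigenvalues remain $t$ and $p$ while the rest of $\spec(\cT)$ has modulus at most some $\delta_2<1$ set by the subdominant eigenvalue of $\bbE_{G\in\bbG}(\cG_{\mathrm{u}}\otimes\tilde{\cG}^{\cI})$. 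Projecting off the dominant subspace and summing the resulting geometric series then yields $\lv\epsilon_m\rv\leq\delta_1\delta_2^m$, with $\delta_1$ collecting the norms of the boundary vectors, of $\cR$, and of the spectral projector onto the subdominant sector. The main obstacle is exactly this last step: resumming the exponentially many multi-$\Delta$ contributions into a single geometric bound, i.e.\ establishing the spectral gap (that the subdominant radius is genuinely $\delta_2<1$, and is small when $\Delta_G$ is small by \cref{thm:smalldelta}), and separately controlling the two boundary factors $i=1$ and $i=m+1$, whose single-$\Delta$ terms are not twirled on both sides and therefore survive---these are what produce the enhanced, possibly oscillating deviations at $m\approx1$ noted in the text.
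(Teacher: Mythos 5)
You have correctly identified the two cancellation identities $\bbE_{G\in\bbG}(\Delta_G\cL\cG\ct)=0$ and $\bbE_{G\in\bbG}(\cG\ct\cR\Delta_G)=0$ that follow from taking $\cL$ and $\cR$ to be eigenvector solutions of \cref{eq:conditions}, and your extraction of $Ap^m+Bt^m$ from the decoupled bond average is sound. But the proof is incomplete exactly where you say it is: the bound $\lv\epsilon_m\rv\leq\delta_1\delta_2^m$ is never established, being deferred to an unproved spectral-gap property of your transfer operator $\cT$ and a resummation of ``exponentially many multi-$\Delta$ contributions.'' That is the whole content of the theorem, and the route you sketch would be hard to close: Bauer--Fike only locates eigenvalues and says nothing about the norms of the associated spectral projectors, so even granting a subdominant spectral radius $\delta_2<1$, the prefactor $\delta_1$ in a spectral decomposition of $\cT^m$ involves the condition number of a possibly nearly defective eigenbasis and is not controlled by the gate dependence; nor do you show that the subdominant radius is small (let alone equal to $\bbE_G\lV\Delta_G\rV$) when the $\Delta_G$ are small.

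The point you are missing is that no resummation is needed: because $\cL$ and $\cR$ are \emph{exact} solutions of \cref{eq:conditions}, every mixed term vanishes identically, not merely the single-$\Delta$ ones. Peeling factors off the right, for each $1\leq j\leq m+1$ one has
\begin{align}
\bbE_{\vec{G}\in\bbG^m}(\tilde{\cG}_{m+1:j}\Delta_{j-1:1})
= \bbE_{\vec{G}\in\bbG^m}(\tilde{\cG}_{m+1:j+1}\cL\cG_j\cR\Delta_{j-1:1})
+ \bbE_{\vec{G}\in\bbG^m}(\tilde{\cG}_{m+1:j+1}\Delta_{j:1}),
\end{align}
where the first term on the right is the standard twirl $\cL\cD_{p,t}^m\cR$ for $j=1$ and vanishes for every $j\geq 2$: rewriting $\cG_j$ via the inversion constraint and twirling away the block of $\tilde{\cG}$'s to its left isolates the factor $\bbE_{G}(\cG\ct\cR\tilde{\cG})-\cD_{p,t}\cR=0$ from \cref{eq:rcona,eq:LRscale}. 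Iterating yields the exact identity $\bbE_{\vec{G}}(\tilde{\cG}_{m+1:1})=\cL\cD_{p,t}^m\cR+\bbE_{\vec{G}}(\Delta_{m+1:1})$, so the entire perturbation is a single product of $m+1$ factors of $\Delta$. Submultiplicativity of the norms and the mutual independence of any $m$ of the $m+1$ gates then give $\lv\epsilon_m\rv\leq\max_G\lV\Delta_G\rV_{1\to1}\lV\rho\rV_1\lV Q\rV_1[\bbE_H\lV\Delta_H\rV]^m$ directly, with no spectral gap, no geometric series, and no surviving boundary terms; the untwirled factors at $i=1$ and $i=m+1$ that worry you are simply absorbed into this single product.
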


\begin{rem}
\Cref{eq:exact_decay} reduces to the standard single-exponential model plus an exponentially decreasing perturbation when $\tilde{\cG}$ is a CPTP map for all $G$ as then $\bbE_{G\in\bbG} (\tilde{\cG}\otimes \cG)$ and $\bbE_{G\in\bbG} (\cG \otimes \tilde{\cG})$ are both CPTP maps and so all eigenvalues are in the unit disc and the eigenvalue with eigenvector close to $|\hato_d\raa$ is 1~\cite{Perez-Garcia2006}.

The constants $A$ and $B$ and the perturbation term are
\begin{align}
A &= \laa Q|\cL|\cR(\rho)-I_d/d\raa \notag\\
B &= \laa Q|\cL|I_d/d\raa \notag\\
\epsilon_m &= \laa Q|\bbE_{\vec{G}\in\bbG^m}(\Delta_{m+1:1})|\rho\raa.
\end{align}
with $\Delta_j = \Delta_{G_j}= \tilde{\cG}_j - \cL\cG_j \cR$ and where $\cL$ and $\cR$ are solutions to \cref{eq:conditions}. 
The quantities bounding the perturbation term are
\begin{align}\label{eq:deltas}
\delta_1 &= \max_G \lV\Delta_G \rV_{1\to 1}\lV\rho\rV_1 \lV Q\rV_1\notag\\
\delta_2 &= \bbE_{G\in\bbG}(\lV\Delta_G\rV).
\end{align}
which are in turn bounded in \cref{thm:smalldelta} in the gauge where $\cL=\cI$.
Note that $\lV\rho\rV_1,\lV Q\rV_1\leq 1$ in a standard gauge, however, the gauge transformation required to set $\cL = \cI$ may change these values and introduce negative eigenvalues to $\rho$ so that the maximization for the induced norm in $\delta_1$ is not restricted to positive semi-definite inputs.
\end{rem}

\begin{proof}
The average map over all randomized benchmarking sequences $\vec{G}$ of length
$m$ is 
\begin{align}
\bbE_{\vec{G}\in\bbG^m}(\tilde{\cG}_{m+1:1}).
\end{align}
Let $\cL$ and $\cR$ satisfy
\begin{subequations}\label{eq:mapcon}
\begin{align}
\bbE_{G\in\bbG} (\tilde{\cG}\cL\cG\ct) &= \cL\cD_{p,t} \label{eq:lcon}\\
\bbE_{G\in\bbG} (\cG\ct \cR \tilde{\cG}) &= \cD_{p,t}\cR \label{eq:rcon}\\
\bbE_{G\in\bbG}(\cG\cR\cL\cG\ct) &= \cD_{p,t}
\end{align}
\end{subequations}
and define
\begin{align}
\Delta_j = \Delta_{G_j} = \tilde{\cG}_j - \cL\cG_j \cR.
\end{align}
Then for any integer $1\leq j \leq m+1$ we can write
\begin{align}\label{eq:splitting}
\bbE_{\vec{G}\in\bbG^m}(\tilde{\cG}_{m+1:j}\Delta_{j-1:1})
&= \bbE_{\vec{G}\in\bbG^m}(\tilde{\cG}_{m+1:j+1} \cL\cG_j \cR \Delta_{j-1:1}) 
+ \bbE_{\vec{G}\in\bbG^m}(\tilde{\cG}_{m+1:j+1}\Delta_j\Delta_{j-1:1}) \notag\\
&= \bbE_{\vec{G}\in\bbG^m}(\tilde{\cG}_{m+1:j+1} \cL\cG_j \cR \Delta_{j-1:1}) 
+ \bbE_{\vec{G}\in\bbG^m}(\tilde{\cG}_{m+1:j+1}\Delta_{j:1}),
\end{align}
where $\Delta_{0:1} = I_d = \tilde{\cG}_{m+1:m+2}$ by convention.
Moreover, as $G_{m+1:1} = I_d$ and any set of $m$ elements of $\vec{G}$ are 
statistically independent and uniformly distributed, we can apply 
\cref{eq:lcon} recursively and use the fact that $\cD_{p,t}$ commutes with 
unital channels to obtain
\begin{align}\label{eq:j1term}
\bbE_{\vec{G}\in\bbG^m}(\tilde{\cG}_{m+1:2} \cL\cG_1 \cR)
&= \bbE_{\vec{G}\in\bbG^m}(\tilde{\cG}_{m+1:2} \cL [\cG_{m+1:2}]\ct \cR) \notag\\
&= \cL\cD_{p,t}^m \cR
\end{align}
for $j=1$. Similarly, we can use \cref{lem:twirl,eq:mapcon} for $j\geq 2$ to obtain
\begin{align}\label{eq:otherj}
\bbE_{\vec{G}\in\bbG^m}[\tilde{\cG}_{m+1:j+1} \cL\cG_j \cR \Delta_{j-1:1}]
&= \bbE_{\vec{G}\in\bbG^m}[\tilde{\cG}_{m+1:j+1} \cL (\cG_{m+1:j+1})\ct 
(\cG_{j-2:1})\ct \cG_{j-1}\ct \cR \Delta_{j-1}\Delta_{j-2:1}] \notag\\
&= \cL\cD_{p,t}^{m-j}\bbE_{\vec{G}\in\bbG^m}[(\cG_{j-2:1})\ct \cG_{j-1}\ct \cR(\tilde{\cG}_{j-1}-\cL\cG_{j-1}\cR) \Delta_{j-2:1}] \notag\\
&= \cL\cD_{p,t}^{m-j} \bbE_{\vec{G}\in\bbG^m}[(\cG_{j-2:1})\ct 
\bbE_{G_{j-1}}(\cG_{j-1}\ct \cR\tilde{\cG}_{j-1}- \cD_{p,t}\cR) \Delta_{j-2:1}] \notag\\
&= 0.
\end{align}
Therefore we can apply \cref{eq:splitting,eq:j1term} and then recursively apply
\cref{eq:splitting,eq:otherj} to obtain
\begin{align}
\bbE_{\vec{G}\in\bbG^m} \tilde{\cG}_{m+1:1} 
&= \cL\cD_{p,t}^m \cR 
+ \bbE_{\vec{G}\in\bbG^m}[\tilde{\cG}_{m+1:2}\Delta_{1}] \notag\\
&= \cL\cD_{p,t}^m \cR + \bbE_{\vec{G}\in\bbG^m}[\Delta_{m+1:1}].
\end{align}

The average survival probability over all random sequences is then
\begin{align}
\bbE_{\vec{G}\in\bbG^m} Q_{\vec{G}} 
&= \laa Q|\cL\cD_{p,t}^m \cR|\rho\raa + \epsilon_m \notag\\
&= t^m \laa Q|\cL|I_d/d\raa + p^m \laa Q|\cL|\cR(\rho)-I_d/d\raa + \epsilon_m
\end{align}
where
\begin{align}
\epsilon_m &= \laa Q|\bbE_{\vec{G}\in\bbG^m}(\Delta_{m+1:1})|\rho\raa.
\end{align}

We now bound the perturbation term. By the triangle inequality and submultiplicativity,
\begin{align}
\lV \bbE_{\vec{G}\in\bbG^m} (\tilde{\cG}_{m+1:1}) - \cL\cD_{p,t}^m \cR \rV_{1\to1}
&= \lV \bbE_{\vec{G}\in\bbG^m}(\Delta_{m+1:1}) \rV_{1\to1} \notag\\
&\leq \bbE_{\vec{G}\in\bbG^m}\lV \Delta_{m+1:1} \rV_{1\to1} \notag\\
&\leq \bbE_{\vec{G}\in\bbG^m}\lV\Delta_1\rV_{1\to1} \prod_{j= m+1}^2 \lV \Delta_j\rV \notag\\
&\leq \max_G \lV\Delta_G \rV_{1\to 1} \bbE_{\vec{G}\in\bbG^m}\prod_{j= m+1}^2  \lV\Delta_j\rV \notag\\
&\leq \max_G \lV\Delta_G \rV_{1\to 1} [\bbE_H \lV\Delta_H\rV]^m .
\end{align}
\end{proof}

\section{Numerics}\label{sec:numerics}

\begin{figure}
	\includegraphics[width=\linewidth]{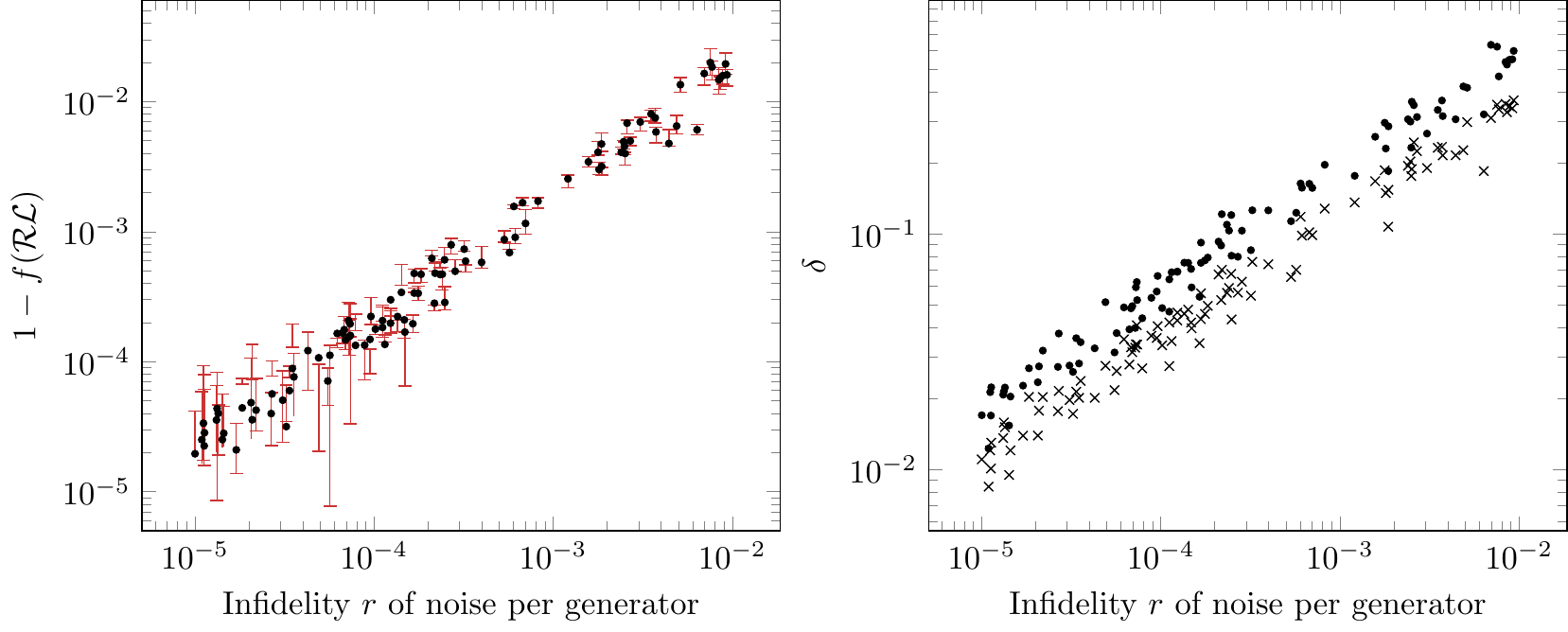}
	\caption{
Left: upper- and lower- edges of 90\% confidence intervals for the decay
parameter from the numerical simulations of RB experiments described in
\cref{sec:numerics} fitted neglecting the perturbation term $\delta_1\delta_2^m$
(red intervals) and the corresponding values of $p$ from \cref{thm:GD_exact}
(black dots).
Right: numerical values of $\delta_1$ (circles) and $\delta_2$ (crosses) from \cref{thm:GD_exact}. 
For all simulated experiments, the perturbation term scales as $\delta_1 \delta_2^m \in O(r^{-m/2})$ and so is negligible for $m\geq 3$.}
	\label{fig:GDnumerics}
\end{figure}

In \cref{fig:GDnumerics} we illustrate the reliability of our 
analysis by comparing the estimates obtained by fitting the simulated RB 
experiments described below under independent random unitary noise on a set of
generating gates to the decay parameters predicted by \cref{thm:GD_exact} and
by plotting the size of the perturbation terms $\delta_1$ and $\delta_2$
calculated in the gauge where $\cL = \cI$.
Mathematica code is available at
\url{https://github.com/jjwallman/numerics}.

For simplicity, we simulate RB experiments using the (projective) group
\begin{align}
\bbG = \{T^t P:t\in\bbZ_3,P\in\{I,X,Y,Z\}\}
\end{align}
where $I,X,Y,Z$ are the standard single-qubit Pauli matrices and
\begin{align}
T = \frac{1}{\sqrt{2}}\begin{pmatrix} 1 & -i \\ 1 & i \end{pmatrix},
\end{align}
which is a subgroup of the single-qubit Clifford group that is also a unitary two-design.
For our simulations, the noisy implementation of $G = T^t P\in\bbG$ is $\tilde{\cG} = \cG'$ where
\begin{align}
G' = U T^t V P
\end{align}
where $U$ and $V$ are independently sampled from the set $\bbU_r\subset
\mathrm{SU}(d)$ satisfying $1 - f(\cU,\cI) = r$, that is, the set of unitary 
channels of fixed infidelity $r$. We then calculate $p(\cR\cL)$ from
\cref{thm:avnoise} and $\delta_1,\delta_2$ from \cref{eq:deltas}.
We set the state to $|0\ra$ and the measurement to be a computational basis
measurement and do not include SPAM errors for simplicity.

The set $\bbU_r$ can be sampled by sampling $V\in\mathrm{SU}(d)$ uniformly from the Haar measure and setting
\begin{align}
U = V\exp(-i\theta Z) V\ct = \cos\theta I - i \sin\theta VZV\ct,
\end{align}
which gives the unique unitarily invariant distribution over $\bbU_r$ for some $r$ as a function of $\theta$.
In order for the infidelity of the resulting gates to be $r$, we require~\cite{Nielsen2002}
\begin{align}
r = \frac{4-\lv \tr U\rv^2 }{6} = \frac{2 - \cos^2\theta}{3},
\end{align}
or $\theta = \arcsin(\sqrt{3r/2})$.

A single RB experiment for a fixed value of $m$ then consisted of the following. 
\begin{enumerate}
	\item Randomly choose $\vec{a}\in \{0,1,2\}^m$ and $\vec{b}\in\{1,2,3,4\}^m$.
	\item Set $G_j = T^{a_j} P_{b_j} P_{b_{j-1}} T^{-a_{j-1}} = T^{a_j - a_{j-1}} P_{\tau_j}$ where $(P_0,P_1,P_2,P_3) = (I,X,Y,Z)$, $a_{m+1} = a_0$, $ b_{m+1} = b_0 = 1$, and $P_{\tau_j} = T^{a_{j-1}} P_{b_j} P_{b_{j-1}} T^{-a_{j-1}}$.
	\item Calculate $\lv \la 0\rv G'_{m+1:1} \lv 0 \ra \rv^2$
\end{enumerate}
We repeat the above steps for 100 random sequences for each value of $m\in \{4,8,16,...,2048\}$ and fit the averages for each sequence to the model
$\bbE_{\vec{\cC}} = 0.5p_{est}^m + 0.5$ using Mathematica's NonlinearModelFit
function weighted by the variance over the 100 sequences at each $m$, where we have set $A=B=0.5$ because of the choice of SPAM and because the noise is unital.

\section{Alternative analysis of RB with gate-dependent noise}\label{sec:RBalt}

We now show how the analyses of gate-dependent noise of Refs.~\cite{Magesan2011,
Magesan2012a, Chasseur2015, Proctor} are too loose to justify fitting RB decay
curves to a single exponential in practical regimes. We will first show that
the higher-order terms in the original analysis of RB give a systematic
uncertainty of the infidelity that dominates the infidelity, even when
the fidelity is dominated by stochastic errors.
We will then discuss subsequent analysis and how it also results in large
systematic errors.

\subsection{Initial RB Analysis}

The initial method of analyzing RB under gate-dependent noise was to perform a
perturbation expansion to perturb about a perfect twirl~\cite{Magesan2011,
Magesan2012a} of the average noise,
\begin{align}\label{eq:av_noise}
\cE = \bbE_{G\in\bbG}(\cG\ct \tilde{\cG}).
\end{align}
Substituting $\tilde{\cG} = \cG\cE + \cG\Delta_G$ into the average map over all 
sequences,
\begin{align}
\bbE_{\vec{G}\in\bbG^m}(\tilde{\cG}_{m+1:1})
\end{align}
and using the triangle inequality and submultiplicativity, the $k$th order terms
in $\Delta_G$ contribute at most
\begin{align}\label{eq:perturbation_bound}
\binom{m+1}{k}\gamma^k = \binom{m+1}{k}\bbE_{G\in\bbG} (\lV \Delta_G \rV_{1\to 1})
\end{align}
to the average survival probability over all sequences of length $m$, $\laa
Q\rv \bbE_{\vec{G}\in\bbG^m}(\tilde{\cG}_{m+1:1}) \lv\rho\raa$.
Ref.~\cite{Magesan2012a} also obtained similar conditions for time-dependent
noise and derived the first-order correction.
As stated in Ref.~\cite{Magesan2011}, higher-order terms will be negligible
provided $m\gamma \ll 1$, so that there is a tension between the amount of
gate-dependent noise and the values of $m$ required to obtain a reasonable fit.
However, there exist noise models such that $m\gamma$ is not negligible~\cite{Proctor}.

The analysis of Ref.~\cite{Magesan2012a} also holds for the more general
gate-dependent noise model $\cL_G \cG \cR_G$, where $\cL$ and $\cR_G$ are
gate-independent and gate-dependent noise processes acting from the left and
right (that is, after and before a ideal gate) respectively, so that some
optimization of $\gamma$ is possible.

However, we now show that the higher-order terms cannot be neglected
when fitting data in experimental regimes. Suppose the average survival
probability over all sequences of some fixed length $m$ is measured to be
$f_m$. From the zeroth-order model with $p = 1-2r$,
\begin{align}
r = \frac{1}{2} - \frac{1}{2}\sqrt[m]{\frac{f_m - B}{A}},
\end{align}
ignoring any finite measurement statistics. However, by
\cref{eq:perturbation_bound}, the first-order perturbations add some
$\delta\in[-m\gamma,m\gamma]$ to $f_m$, so that the true value of $r$ is
\begin{align}
r = \frac{1}{2} - \frac{1}{2}\sqrt[m]{\frac{f_m -B - \delta}{A}} 
\approx \frac{1}{2} - \frac{1}{2}\sqrt[m]{\frac{f_m - B}{A}} +
\frac{\delta}{2m (f_m - B)},
\end{align}
where we have assumed $m\gg 1$ for the approximation, so that the first-order
perturbation terms add a systematic uncertainty of $\gamma/2(f_m-B)$ to the
estimate of $r$.

Therefore the systematic ucertainty will dominate the estimate of the infidelity
unless $\gamma \approx r(\cE)$. To illustrate how impractical this requirement
is, and thus the value of the tighter analysis in \cref{sec:GD_RB}, consider the
following hypothetical implementation of the single-qubit Clifford group (that
is, the 24 elements of SU(2) that permute the single-qubit Paulis $X$, $Y$ and
$Z$ up to an overall sign). Supoose that half the elements are implemented with
only depolarizing noise, that is, $\tilde{\cG} =\cG\cD_\nu$, and the other half
are implemented with an additional rotation around the $z$-axis of the Bloch
sphere by some angle $\theta\in(0,\tfrac{\pi}{2})$, that is, $\tilde{\cG} =
\cG\cD_\nu\cZ_\theta$ where
\begin{align}
Z_\theta = \left(\begin{array}{cc} 1 & 0 \\ 0 & e^{i\theta} 
\end{array}\right).
\end{align}
For this noise model, the average noise is 
$\cE = \cD_\nu(I_4 + \cZ_\theta)/2$ by \cref{eq:av_noise} and so
\begin{align}
\gamma &= \frac{1}{2}\lV \cD_\nu - \cD_\nu(I_4 + \cZ_\theta)/2\rV_{1\to 1}
+ \frac{1}{2}\lV \cD_\nu\cZ_\theta - \cD_\nu(I_4 + \cZ_\theta)/2\rV_{1\to 1}
\notag\\
&= \frac{\nu}{2}\lV \cZ_\theta - I_4\rV_{1\to 1} \notag\\
r(\cE) &= \frac{1}{6}\Bigl(3 - 2\nu - \nu\cos\theta\Bigr)
\end{align}
using $1-r(\cE) = f(\cE,\cI) = 1/2 + \sum_{\sigma=X,Y,Z} \tr[\sigma \cE(\sigma)]/12$~\cite{Bowdrey2002}.
The maximization in the induced trace norm will be achieved by any state in the $xy$ plane (e.g., $(|0\ra + |1\ra)/\sqrt{2}$), giving
\begin{align}
\gamma = \nu\lv\sin(\theta/2)\rv.
\end{align}
For the estimate based on the zeroth-order model to be valid to within a factor
of two, we require $\gamma\leq r(\cE)$, that is, $\lv \theta \rv\lesssim
1-\nu$. For example, the systematic uncertainty in the estimate of $r$ is
$9r$ (i.e., under-reporting the error rate $r$ by an order of magnitude) when
$\nu = 0.99$ and gate-dependent errors account for 10\% of the total infidelity
($\theta = 0.09$ in the above model) or when $\nu=0.999$ and the gate-dependent
errors account for just 1\% of the total infidelity ($\theta = 0.009$ in the
above model).

One might expect that the above analysis is pessimistic and that the
contributions from gate-dependent perturbations at different sequence lengths
would average out. However, Ref.~\cite{Proctor} presented an explicit model 
wherein the contributions from gate-dependent perturbations are systematic.

\subsection{Subsequent analyses of RB}

Ref.~\cite{Proctor} demonstrated that the RB decay rate could differ from the
average gate infidelity of the average noise defined in \cref{eq:av_noise}
by orders of magnitude. Ref.~\cite{Proctor} also presented a significantly
improved analysis of the original RB protocol, which can be viewed as an 
application of the method of \cite{Chasseur2015} to the original RB protocol.
However, this analys has three crucial drawbacks.

First, no interpretation of the decay parameter was provided, leading to
frequent discussions of ``the RB number'' $r_{RB}$ at conferences. In the
original version, Ref.~\cite{Proctor} states that ``it is not clear that $r$ (a
function of the RB decay parameter) should be called (the) ``fidelity'', except
in the special case of gate-independent error maps,'' although this statement
has since been revised based on the analysis in the present paper.

Second, the analysis of Ref.~\cite{Proctor} is vulnerable to the criticism it
makes of Ref.~\cite{Magesan2012a}, namely, that the infidelity $r(\cE)$ of a
fixed decomposition of the noise (defined in \cref{eq:av_noise}) can differ from
$r_{RB}$ by orders of magnitude. This is problematic for the analysis of
Ref.~\cite{Proctor} (and also for that of Ref.~\cite{Chasseur2015}) because they
bound a perturbation from an exponential of the form $(1-r_{RB})^m$ (ignoring
small dimensional factors) by the diamond norm with respect to a fixed
decomposition of the noise, giving a perturbation on the order of
$\sqrt{r(\cE)}\gg r(\cE) \gg r_{RB}$. While some gauge optimization of their
bound is possible (especially using a gauge motivated by the present analysis),
it is unclear whether such optimization can close the gap between $r(\cE)$ and
$r_{RB}$. Fortunately, the present paper obviates the potential problem by
providing a much sharper bound.

Finally, even if gauge optimization allows the perturbation in their method to
be bounded by $\sqrt{r_{RB}}$, this still renders their analysis useless in two
key regimes, namely, for ``large'' error rates or in the linear part of the
decay.

For ``large'' error rates where $r_{RB} \approx 0.01$, which accounts for most
RB experiments to date, the best bound that can be put on the perturbation
without knowing the full experimental noise process is approximately
$\sqrt{r_{RB}} = 0.1$. With such a large perturbation, any sequence lengths
that have an average survival probability over $0.9$ or under $0.6$ are
effectively useless assuming SPAM values of $A=B=0.5$ (and the range of
values that contain any information decreases with $A$).

In the linear part of the decay, such as in the numerics of Ref.~\cite{Proctor},
the systematic uncertainty due to the perturbation dominates the decay.
For concreteness, consider the linear regime for a single qubit, 
\begin{align}
\bbE_{\vec{G}\in\bbG^m}(Q_{\vec{G}})\approx A + B -2A m [1-p(\cR\cL)] .
\end{align}
where $p(\cR\cL)$ is the true RB decay parameter from \cref{thm:GD_exact}.
If $p(\cR\cL)$ is estimated via the slope between $m=m_1$ and $m=m_2$, then the
unknown perturbation contributes a systematic uncertainty of approximately
$\sqrt{r(\cE)}/(m_1-m_2)$ to the estimate of $p$, so that essentially no
information is gained unless $m_1 - m_2 \gg \sqrt{r(\cE)}$. For the current
record value of $1-p \approx 10^{-6}$, this translates to requiring $m_2 - m_1
> 1000$ assuming $1-p(\cR\cL) \approx r(\cE)$. However, the assumption that
$1-p(\cR\cL) \approx r(\cE)$ is unjustified as shown by their analysis, so that
the required spacing of sequence lengths in order to overcome the systematic
uncertainty due to their bound is unknown, rendering their analysis impractical.
Consequently, the analysis of Ref.~\cite{Proctor} does not explain the empirical
success of RB to date or justify fitting to a single exponential without
further analysis.

\section{Conclusion}

We have proven that randomized benchmarking experiments result in a single exponential with a negligible perturbation under gate-dependent noise. 
Moreover, we proved that the decay parameter is the fidelity of the average noise between idealized gates.
More specifically, we can write $\tilde{\cG} = \cL\cG\cR_G$ where $\cL$ is a solution to \cref{eq:conditions}.
Then for trace-preserving noise the RB decay parameter is $\bbE_{G\in\bbG}[p(\cR_G\cL)]$, that is, the average survival probability between ideal gates.
As $\tilde{\cG}\to\cS^{-1}\tilde{\cG}\cS$ maps $\cL\to \cS\cL$ 
and $\cR\to\cR\cS^{-1}$ in \cref{thm:avnoise}, the decay parameter $p = \bbE_{G\in\bbG}(\cR_G\cL)$ is manifestly gauge invariant, answering the recent criticism of Ref.~\cite{Proctor}. 

By rigorously reducing the potential impact of gate-dependent errors, the present work enables randomized benchmarking to more reliably diagnose the presence of non-Markovian noise. 
More precisely, there are two key assumptions that may not hold well enough for the standard exponential decay model to be valid, namely, Markovianity and time-independence. 
However, the only effect that time-dependent noise can have is to alter the rate at which the survival probability decays~\cite{Emerson2005,Wallman2014}. 
Consequently, any ``revivals'' in RB experiments, that is, statistically significant increases in the average survival probability as $m$ increases~\cite{Epstein2014}, can be attributed directly to the noise being non-Markovian.

An intriguing observation from the present analysis is that gate-dependent
fluctuations may produce a significant deviation from the exponential fit for 
short sequences. This deviation is possible because the first few values 
of $m$ are the most sensitive to gate-dependent fluctuations, which may be as 
large as $0.1$ for $m=1$ and $0.01$ for $m=2$. Consequently, these sequence 
lengths should be omitted when fitting to \cref{eq:fidelity_decay}, but can also indicate the presence of gate-dependent noise.

Alternative protocols based on randomized benchmarking have been developed that 
also assume gate-independent noise~\cite{Emerson2007,Magesan2012,Wallman2015,
Wallman2015b,Wallman2016,Carignan-Dugas2015,Cross2016}. 
We leave applying the current analysis to these protocols to future work.

One main open problem of the current work is that the solutions $\cL$ and $\cR$ from \cref{thm:avnoise} are not guaranteed to be CPTP maps, even under rescaling.

\acknowledgments

The author acknowledges helpful discussions with Robin Blume-Kohout, Arnaud 
Carignan-Dugas, Joseph Emerson, Steve Flammia, Chris Granade, Richard Kueng, Robin Harper, Jonas Helsen, Thomas Monz, Jiaan Qi and Philipp Schindler.
This research was supported by the U.S. Army
Research Office through grant W911NF-14-1-0103. 
This research was undertaken thanks in part to funding from the Canada First
Research Excellence Fund.

\end{document}